\documentclass[12pt]{llncs}

\setcounter{tocdepth}{3}

\usepackage{times}
\usepackage{amsmath}
\usepackage{amsfonts}
\usepackage{amssymb}
\usepackage{graphicx,psfig,epsfig}
\usepackage{color}
\usepackage{algorithm}
\usepackage{algorithmic,xspace}
\usepackage{url}
\usepackage[colorlinks]{hyperref}
\usepackage{epsfig}

\usepackage{vmargin,fancyhdr}   
\setpapersize{USletter}
\setmarginsrb{.75in}{.5in}        
             {.75in}{.5in}        
             {.25in}{.25in}     
             {.25in}{.5in}      
\setlength{\marginparwidth}{.75in}
\setlength{\marginparsep}{.05in}

\newcommand{\ben}{\begin{enumerate}}
\newcommand{\een}{\end{enumerate}}

\newcommand{\field}[1]{\mathbb{#1}} 

\newcommand{\beql}[1]{\begin{equation}\label{#1}}
\newcommand{\eeq}{\end{equation}}
\newcommand{\comment}[1]{}

\newcommand{\Abs}[1]{{\left|{#1}\right|}}

\newcommand{\Mean}[1]{{\mathbb E}\left[{#1}\right]}
\newcommand{\Var}[1]{{\mathbb Var}\left[{#1}\right]}

\newcommand{\Prob}[1]{{{\bf{Pr}}\left[{#1}\right]}}
\newcommand{\Set}[1]{{\left\{{#1}\right\}}}

\newcommand{\tr}{{\rm Tr\,}}

\pagestyle{plain}
\pagenumbering{arabic}

\begin{document}

\title{Efficient Triangle Counting in Large Graphs via Degree-based Vertex Partitioning}

\titlerunning{Efficient Triangle Counting in Graph Streams}  
%
\author{Mihail N. Kolountzakis\inst{1}, Gary L. Miller\inst{2} \and Richard Peng\inst{2} \and Charalampos E. Tsourakakis\inst{3}}
\authorrunning{Mihail Kolountzakis et al.}   
%
%
\tocauthor{Gary Miller, Richard Peng, Charalampos Tsourakakis}
\institute{ 
Department of Mathematics, University of Crete, Greece \\
\and
School of Computer Science, Carnegie Mellon University, USA\\
\email{kolount@math.uoc.gr}, \email{glmiller@cs.cmu.edu}, \email{yangp@cs.cmu.edu}, \email{ctsourak@math.cmu.edu} \\ 
\and
Deparment of Mathematical Sciences, Carnegie Mellon University, USA \\
 }

\maketitle

\begin{abstract}
The number of triangles is a computationally expensive graph statistic 
which is frequently used in complex network analysis (e.g., transitivity ratio),
in various random graph models (e.g., exponential random graph model) 
and in important real world applications such as spam detection,
uncovering of the hidden thematic structure of the Web and link recommendation. 
Counting triangles in graphs with millions and billions of edges requires algorithms
which run fast, use small amount of space, provide accurate estimates of the number of triangles 
and preferably are parallelizable. 

In this paper we present an efficient triangle counting algorithm which can be adapted 
to the semistreaming model \cite{feigenbaum}. The key idea of our algorithm is to combine the sampling 
algorithm of \cite{Tsourakakiskdd09,TsourakakisArxiv} and the partitioning of the
set of vertices into a high degree and a low degree subset respectively as in \cite{alon:alon},
treating each set appropriately. We obtain a running time  
$O \left( m + \frac{m^{3/2} \Delta \log{n} }{t \epsilon^2} \right)$
and an $\epsilon$ approximation (multiplicative error), where $n$ is the number of vertices,
$m$ the number of edges and $\Delta$ the maximum number of triangles an edge is contained. 
Furthermore, we show how this algorithm can be adapted to the semistreaming model 
with space usage $O\left(m^{1/2}\log{n} + \frac{m^{3/2} \Delta \log{n}}{t \epsilon^2} \right)$
and a constant number of passes (three) over the graph stream.
We apply our methods in various networks with several millions of edges and we obtain excellent results.
Finally, we propose a random projection based method for triangle counting and provide a sufficient 
condition to obtain an estimate with low variance. 

\end{abstract}

\section{Introduction}
\label{sec:intro}
Graphs are ubiquitous: the Internet, the World Wide Web (WWW), social networks, protein interaction networks 
and many other complicated structures are modeled as graphs \cite{chunglu}. 
The problem of counting subgraphs is one of the typical graph mining tasks that has attracted 
a lot of attention. The most basic, non-trivial subgraph, is the triangle. 
Given a simple, undirected graph $G(V,E)$, a triangle is 
a three node fully connected subgraph. Many social networks are abundant in triangles, since typically friends
of friends tend to become friends themselves \cite{faust:social}. This phenomenon is observed
in other types of networks as well (biological, online networks etc.) and is one of the main reasons which 
gave rise to the definitions of the transitivity ratio  and the clustering coefficients of a graph in
complex network analysis \cite{newman:structure}.
Triangles are used in several applications such as uncovering the hidden thematic structure of the web \cite{eckman:thematic},
as a feature to assist the classification of web activity \cite{gionis:spam} and for link recommendation 
in online social networks  \cite{Tsourakakisasonam09}. Furthermore, triangles are used as a network statistic in the exponential random graph model 
\cite{frank}.

In this paper, we propose a new triangle counting method which provides an $\epsilon$ approximation
to the number of triangles in the graph and runs in $O \left( m + \frac{m^{3/2} \Delta \log{n} }{t \epsilon^2} \right)$ time, 
where $n$ is the number of vertices, $m$ the number of edges and $\Delta$ the maximum number of triangles an edge is contained. 
The key idea of the method is to combine the sampling scheme introduced by Tsourakakis et al. in \cite{Tsourakakiskdd09,TsourakakisArxiv}
with the partitioning idea of Alon, Yuster and Zwick \cite{alon:alon} in order to obtain a more efficient sampling scheme. 
Furthermore, we show that this method can be adapted to the semistreaming model with a constant number of passes
and $O\left(m^{1/2}\log{n} + \frac{m^{3/2} \Delta \log{n} }{t \epsilon^2} \right)$ space. 
We apply our methods in various networks with several millions of edges and we obtain excellent results both with respect to the accuracy
and the running time. Furthermore, we optimize the cache properties of the code in order to obtain a significant additional speedup. 
Finally, we propose a random projection based method for triangle counting and provide a sufficient 
condition to obtain an estimate with low variance. 

The paper is organized as follows: Section~\ref{sec:prelim} presents briefly
the existing work and the theoretical background, Section~\ref{sec:method} 
presents our proposed method and Section~\ref{sec:experiments} presents
the experimental results on several large graphs. In Section~\ref{sec:ramifications} we 
provide a sufficient condition for obtaining a concentrated estimate 
of the number of triangles  using random projections and in Section~\ref{sec:concl} we conclude
and provide new research directions. 

\section{Preliminaries}
\label{sec:prelim}
In this section, we briefly present the existing work on the triangle counting problem 
and the necessary theoretical background for our analysis, namely a version of the Chernoff bounded
and the Johnson-Lindenstrauss lemma. Table \ref{tab:Symbols} lists the symbols used in this paper.

\subsection{Existing work}

There exist two categories of triangle counting algorithms, the exact and the approximate. 
It is worth noting that for the applications described in Section~\ref{sec:intro} the exact number of triangles in not crucial.
Thus, approximate counting algorithms which are faster and output a high quality estimate are desirable for the practical applications
in which we are interested in this work.

The state of the art algorithm is due to Alon, Yuster and Zwick \cite{alon:alon} 
and runs in $O(m^{\frac{2\omega}{\omega+1}})$,
where currently the fast matrix multiplication exponent $\omega$ is 2.371
\cite{CopperWino:CopperWino}. Thus, the Alon et al. algorithm currently runs in $O(m^{1.41})$ time. 
Algorithms based on matrix multiplication are not used in practice due to the high memory requirements.
Even for medium sized networks, matrix-multiplication based algorithms are not applicable. 
In planar graphs, triangles can be found in $O(n)$ time \cite{itai:rodeh,pap:yan}.
Furthermore, in \cite{itai:rodeh} an algorithm which finds a triangle in any graph in $O(m^\frac{3}{2})$
time is proposed. This algorithm can be extended to list the triangles in the graph with the same time complexity.
Even if listing algorithms solve a more general problem than the counting one, they
are preferred in practice for large graphs, due to the smaller memory requirements
compared to the matrix multiplication based algorithms. 
Simple representative algorithms are the node- and the edge-iterator algorithms.
The former counts for each node number of triangles it's involved in,
which is equivalent to the number of edges among its neighbors,
whereas in the latter, the algorithm counts for each edge $(i,j)$ the
common neighbors  of nodes $i,j$.
Both of these algorithms have the same asymptotic complexity $O(mn)$, 
which in dense graphs results in $O(n^3)$ time, the complexity of the naive counting algorithm. 
Practical improvements over this family of algorithms have been achieved using various techniques, such as 
hashing and sorting by the degree \cite{latapy,wagner:wagner}.

\begin{table}[t]
	\centering
	\begin{tabular}{c|p{0.35\textwidth}c|} \hline \hline
	
	Symbol & Definition \\ \hline \hline
        $G([n],E)$ & undirected simple graph with $n$ vertices labeled $1,2,..,n$  \\ 
                 &  and edge set $E$ \\ \hline
	$m$ & number of edges in $G$ \\ \hline
	$t$ & number of triangles in $G$ \\ \hline
        $deg(u)$ & degree of vertex $u$ \\ \hline
	$\Delta(u,v)$ & $\#$ triangles  \\ 
                      &  containing vertices $u$ and $v$ \\ \hline
	$\Delta$      &   $\max_{e \in E(G)}{\Delta(e)}$ \\ \hline
	$p$  & sparsification parameter \\ \hline \hline
\end{tabular}
	\caption{Table of symbols}
\label{tab:Symbols}
\end{table}

On the approximate counting side, most of the triangle counting algorithms have been developed in the 
streaming setting. In this scenario, the graph is represented as a stream. 
Two main representations of a graph as a stream are the edge stream and the incidence stream. In the former, edges are arriving
one at a time. In the latter scenario
all edges incident to the same vertex appear successively in the stream. The ordering of the vertices 
is assumed to be arbitrary. A streaming algorithm produces a relative $\epsilon$ approximation 
of the number of triangles with high probability, making a constant number of passes over the stream. 
However, sampling algorithms developed in the streaming literature can be applied in the setting where the graph fits in the memory as well. 
Monte Carlo sampling techniques have been proposed to give a fast estimate of the number of triangles.
According to such an approach, a.k.a. naive sampling \cite{shank:wanger1}, we choose three nodes at random repetitively and check if they form a triangle or not. 
If one makes $$ r = \log({\frac{1}{\delta}})\frac{1}{\epsilon^2}(1+\frac{T_0+T_1+T_2}{T_3})$$
independent trials where $T_i$ is the number of triples with $i$ edges 
and outputs as the estimate of triangles the random variable $T_3'$ equaling to the 
fractions of triples picked that form triangles times the total number of
triples (${n \choose 3}$), then
$$(1-\epsilon)T_3 < T_3' < (1+\epsilon)T_3 $$
with probability at least $1- \delta$.
This is not suitable when $T_3=o(n^2)$, which is often the case when dealing with
real-world networks.

In \cite{yosseff} the authors reduce the problem of triangle counting efficiently to estimating
moments for a stream of node triples. Then, they use the Alon-Matias-Szegedy algorithms \cite{amsalgos} (a.k.a. AMS algorithms) to proceed. 
The key is that the triangle computation reduces in estimating the zero-th, first and second frequency moments, which can be done efficiently. 
Again, as in the naive sampling, the denser the graph the better the approximation.
The AMS algorithms are also used by \cite{jowhary}, where simple sampling techniques are used, such
as choosing an edge from the stream at random and checking how many common neighbors its two endpoints share
considering the subsequent edges in the stream. 
Along the same lines,  \cite{buriol} proposed two space-bounded sampling algorithms to estimate the number of triangles. 
Again, the underlying sampling procedures are simple. E.g., for the case of the edge stream representation, they sample randomly
an edge and a node in the stream and check if they form a triangle. Their algorithms are the state-of-the-art algorithms to 
the best of our knowledge. The three-pass algorithm presented therein, counts  in the first pass the number of edges, in the second pass 
it samples uniformly at random an edge $(i,j)$ and a node $k \in V-\{i,j\}$ and in the third pass it tests
whether the edges $(i,k),(k,j)$ are present in the stream. The number of draws that have to be done in order to get 
concentration (these draws are done in parallel), is of the order
$$ r = \log({\frac{1}{\delta}})\frac{2}{\epsilon^2}(3+\frac{T_1+2T_2}{T_3})$$
Even if the term $T_0$ is missing compared to the naive sampling, the graph has still to be fairly dense with respect
to the number of triangles in order to get an $\epsilon$ approximation with high probability. 
In the case of ``power-law'' networks it was shown in \cite{me1} that the spectral counting of triangles 
can be efficient due to their special spectral properties and \cite{TsourakakisKAIS} extended this idea 
using the randomized algorithm by \cite{drineas:frieze} by proposing a simple biased node sampling. 
This algorithm can be viewed as a special case
of a streaming algorithm, since there exist algorithms, e.g., \cite{tamas}, that perform a constant number of passes
over the non-zero elements of the matrix to produce a good low rank matrix approximation.
In \cite{gionis:spam} the semi-streaming model for counting triangles is introduced, which allows $\log n$ passes over the edges. 
The key observation is that since counting triangles reduces to computing the intersection of two sets, namely the induced neighborhoods
of two adjacent nodes, ideas from  locality sensitivity hashing \cite{alan} are applicable to the problem. 
In \cite{Tsourakakiskdd09} an algorithm which tosses a coin independently for each edge with probability $p$ to keep the edge and probability $q=1-p$
to throw it away is proposed. It was shown later by Tsourakakis, Kolountzakis and Miller \cite{TsourakakisArxiv} 
using a powerful theorem due to Kim and Vu \cite{kim-vu} that under mild conditions on the triangle density the method results in a strongly concentrated
estimate on the number of  triangles. 
More recently, Avron proposed a new approximate triangle counting method based on a randomized algorithm
for trace estimation \cite{avron}. 

\subsection{Concentration of Measure}

In Section~\ref{sec:method} we make extensive use of the following version of the Chernoff bound \cite{chernoff}.

\begin{theorem} 
Let $X_1, X_2, \ldots, X_k$ be independently distributed $\{0,1\}$ variables with $E[X_i]=p$. Then for any $\epsilon > 0$, we have
$$Pr \left[ |\frac{1}{k} \sum_{i=1}^k X_i - p| > \epsilon p \right] \leq 2 e^{-\epsilon^2pk/2}$$
\end{theorem}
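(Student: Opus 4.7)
The plan is to use the standard exponential moment (Chernoff) method, splitting the two-sided tail into an upper tail and a lower tail via a union bound, and then applying Markov's inequality to the moment generating function $E[e^{tS}]$ of the sum $S = \sum_{i=1}^k X_i$. Since $E[S] = pk$, the event in the statement is exactly $\{|S - pk| > \epsilon pk\}$, which decomposes as $\{S > (1+\epsilon)pk\} \cup \{S < (1-\epsilon)pk\}$. Bounding each by $e^{-\epsilon^2 pk/2}$ and summing gives the factor of $2$ in the claimed bound.

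For the upper tail, I would fix $t > 0$ and write $\Pr[S > (1+\epsilon)pk] = \Pr[e^{tS} > e^{t(1+\epsilon)pk}] \leq e^{-t(1+\epsilon)pk}\, E[e^{tS}]$. Independence lets me factor $E[e^{tS}] = \prod_i E[e^{tX_i}] = (1-p+pe^t)^k$, and the elementary inequality $1+x \leq e^x$ applied with $x = p(e^t-1)$ gives $E[e^{tS}] \leq \exp(pk(e^t - 1))$. Choosing the standard minimizer $t = \ln(1+\epsilon)$ collapses the bound to $\bigl(e^\epsilon/(1+\epsilon)^{1+\epsilon}\bigr)^{pk}$. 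The lower tail is handled symmetrically with $t < 0$, selecting $t = \ln(1-\epsilon)$ and yielding the analogous expression $\bigl(e^{-\epsilon}/(1-\epsilon)^{1-\epsilon}\bigr)^{pk}$.

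The only nontrivial analytic step, and the place where care is needed, is verifying that both base factors are bounded by $e^{-\epsilon^2/2}$, so that raising to the $pk$ power gives the advertised exponent. I would do this by expanding $\varphi(\epsilon) := \epsilon - (1+\epsilon)\ln(1+\epsilon) + \epsilon^2/2$ (and its counterpart for the lower tail) in a Taylor series around $0$ and checking that $\varphi(\epsilon) \leq 0$ on the admissible range, noting that the lower-tail version is tight for all $\epsilon \in (0,1)$ while the upper-tail version is the standard source of the restriction. I do not expect a genuine obstacle here, just a short but slightly finicky convexity/Taylor calculation; the whole proof is an application of the textbook Chernoff machinery and fits in a page.
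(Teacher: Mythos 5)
The paper does not actually prove this statement; it quotes it as a standard Chernoff bound with a citation, so there is no in-paper argument to compare against. Your plan --- union bound over the two tails, Markov applied to $E[e^{tS}]$, independence to factor the MGF, $1+x\le e^x$, and the optimizers $t=\ln(1\pm\epsilon)$ --- is the textbook route, and it does close for the lower tail: with $\psi(\epsilon)=-\epsilon-(1-\epsilon)\ln(1-\epsilon)+\epsilon^2/2$ one has $\psi(0)=0$ and $\psi'(\epsilon)=\epsilon+\ln(1-\epsilon)\le 0$, so $\psi\le 0$ on $[0,1)$ and the lower tail is indeed at most $e^{-\epsilon^2 pk/2}$.

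The gap is precisely the step you deferred as finicky but unproblematic. For the upper tail you need $\varphi(\epsilon)=\epsilon-(1+\epsilon)\ln(1+\epsilon)+\epsilon^2/2\le 0$, but $\varphi(0)=0$ and $\varphi'(\epsilon)=\epsilon-\ln(1+\epsilon)>0$ for all $\epsilon>0$, so $\varphi(\epsilon)>0$ on the entire range, not merely outside some admissible window; the Taylor expansion $\epsilon-(1+\epsilon)\ln(1+\epsilon)=-\epsilon^2/2+\epsilon^3/6-\cdots$ shows the sign goes the wrong way already at third order. Since the exponential-moment bound is essentially tight for binomials, no other choice of $t$ rescues the constant $1/2$ on the upper tail: what this method actually yields is $\exp\bigl(-\epsilon^2 pk/(2+\epsilon)\bigr)$, hence $\exp(-\epsilon^2 pk/3)$ for $\epsilon\le 1$. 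So your argument proves the theorem with the denominator $2$ replaced by $3$ (for $\epsilon\le 1$), which is all that the paper's later applications need up to constants; but the bound exactly as stated, with $e^{-\epsilon^2 pk/2}$ covering both tails, is the loose folklore form --- it is not obtainable by this route, and for $\epsilon=1$ with $p$ small and $pk$ large it is in fact false, since the upper tail of a Binomial then decays only like $e^{-(2\ln 2-1)pk}\approx e^{-0.386\,pk}$. You should either restate the theorem with the correct constant or restrict $\epsilon$ and prove the weaker exponent.
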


\subsection{Random Projections}

A random projecton $x \to Rx$ from $\field{R}^d \to \field{R}^k$ approximately preserves all Euclidean distances.
One version of the Johnson-Lindenstrauss lemma \cite{lindenstrauss} is the following:

\begin{lemma}[Johnson Lindenstrauss]
Suppose $x_1,\ldots,x_n \in \field{R}^d$ and $\epsilon>0$ and take $k=C \epsilon^{-2} \log n$. Define the random matrix $R \in \field{R}^{k\times n}$ by taking all $R_{i,j} \sim N(0,1)$ (standard gaussian) and independent. Then, with probability bounded below by a constant the points $y_j = R x_j \in \field{R}^k$ satisfy
$$
(1-\epsilon) \Abs{x_i-x_j} \le \Abs{y_i - y_j} \le (1+\epsilon) \Abs{x_i - x_j}
$$
for $i, j=1,2,\ldots,n$.
\label{thrm:jllemma}
\end{lemma}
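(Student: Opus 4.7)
The plan is to reduce simultaneous distance preservation for all pairs of points to a single-vector norm preservation statement, and then boost it by a union bound. By linearity $y_i - y_j = R(x_i - x_j)$, so it suffices to prove that for each fixed nonzero $v \in \field{R}^d$ the random variable $\Abs{Rv}/\sqrt{k}$ is tightly concentrated around $\Abs{v}$ (the $1/\sqrt{k}$ factor is implicit in how the statement should be read, since entries $R_{ij} \sim N(0,1)$ yield $\Mean{\Abs{Rv}^2} = k\Abs{v}^2$). Then I would take $v = x_i - x_j$ ranging over all $\binom{n}{2}$ pairs.

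To prove single-vector concentration, I would use rotational invariance of the Gaussian. Each row $R_\ell$ of $R$ has iid $N(0,1)$ entries, hence $\Inner{R_\ell}{v} \sim N(0, \Abs{v}^2)$, and the $k$ coordinates of $Rv$ are iid, making $\Abs{Rv}^2/\Abs{v}^2$ a chi-squared random variable $\chi^2_k$. Using the moment generating function $\Mean{e^{\lambda \chi^2_k}} = (1-2\lambda)^{-k/2}$ (valid for $\lambda < 1/2$) and optimizing the Chernoff exponential over $\lambda$ separately in each tail, I would obtain
\[
\Prob{\left| \tfrac{1}{k}\Abs{Rv}^2 - \Abs{v}^2 \right| > \epsilon \Abs{v}^2} \le 2 e^{-c\epsilon^2 k}
\]
for an absolute constant $c > 0$ and $\epsilon$ bounded away from $1$. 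A union bound over the $\binom{n}{2}$ difference vectors yields total failure probability at most $\binom{n}{2} \cdot 2 e^{-cC\log n} = O(n^{2-cC})$ once $k = C\epsilon^{-2}\log n$. Choosing $C > 2/c$ makes this less than any prescribed constant, and taking square roots (absorbing the $\sqrt{1\pm\epsilon}$ into a constant factor adjustment of $\epsilon$) gives the stated two-sided bound on $\Abs{y_i - y_j}$.

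The main technical obstacle is the chi-squared tail estimate in the second step. Squared standard Gaussians are subexponential rather than subgaussian, so the Chernoff bound for bounded $\{0,1\}$ variables stated in the preceding subsection cannot be invoked directly. One must instead work from the explicit mgf and track the asymmetry of $\chi^2_k$: the lower tail is tighter than the upper tail, and the optimization in $\lambda$ must stay strictly below $1/2$ to keep the mgf finite. Once this concentration is in hand, the rest of the argument is essentially bookkeeping with the union bound and a constant-factor rescaling of $\epsilon$.
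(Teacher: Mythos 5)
Your proof is correct, but there is nothing in the paper to compare it against: the paper states this lemma purely as a quoted background result, citing Johnson and Lindenstrauss, and never proves it. Your argument is the standard modern proof via Gaussian projections --- reduce to norm preservation of a single vector by linearity, observe that $\Abs{Rv}^2/\Abs{v}^2$ is $\chi^2_k$ by rotational invariance, establish the two-sided tail bound $2e^{-c\epsilon^2 k}$ from the moment generating function, and union bound over the $\binom{n}{2}$ difference vectors --- and all the steps go through. You are also right to flag that the statement as printed omits the $1/\sqrt{k}$ normalization (with unnormalized $N(0,1)$ entries one has $\Mean{\Abs{Rv}^2}=k\Abs{v}^2$, so $y_j$ must be read as $k^{-1/2}Rx_j$ for the displayed inequalities to make sense); this is a genuine imprecision in the paper's statement, not in your proof. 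One small simplification: when passing from the squared bound to the stated one you do not even need to adjust $\epsilon$, since $\sqrt{1+\epsilon}\le 1+\epsilon$ and $\sqrt{1-\epsilon}\ge 1-\epsilon$ for $\epsilon\in(0,1)$.
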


\section{Proposed Method}
\label{sec:method}
Our algorithm combines two approaches that have been taken on triangle counting:
sparsify the graph by keeping a random subset of the edges \cite{Tsourakakiskdd09,TsourakakisArxiv}
followed by a triple sampling using the idea of vertex partitioning due to
Alon, Yuster and Zwick \cite{alon:alon}.

\subsection{Edge Sparsification}

The following method was introduced in  \cite{Tsourakakiskdd09} and was shown
to perform very well in practice: keep each edge with probability $p$
independently.
Then for each triangle, the probability of it being kept is $p^3$. So the
expected number of triangles left is $p^3t$.
This is an inexpensive way to reduce the size of the graph as it can be done in
one pass over the edge list using $O(mp)$ random variables (more details
can be found in section \ref{sec:sublin} and \cite{knuth}).

In a later analysis \cite{TsourakakisArxiv}, it was shown that the number of
triangles in the sampled graph is concentrated around the actual triangle count
as long as $p^3 \geq \tilde{\Omega}(\frac{\Delta}{t})$.
Here we show a similar bound using more elementary techniques.
Suppose we have a set of $k$ triangles such that no two share an edge,
for each such triangle we define a random variable $X_i$ which is $1$ if
the triangle is kept by the sampling and $0$ otherwise.
Then as the triangles do not have any edges in common,
the $X_i$s are independent and take value $0$ with probability $1-p^3$ and
$1$ with probability $p^3$.
So by Chernoff bound, the concentration is bounded by:

$$Pr \left[ |\frac{1}{k} \sum_{i=1}^k X_i - p^3| > \epsilon p^3 \right] \leq 2e^{-\epsilon^2p^3k/2}$$

So when $p^3 k \epsilon^2 \geq 4d\log{n}$, the probability of
sparsification returning an $\epsilon$-approximation is at least $1-n^{-d}$.
This is equivalent to $p^3 k \geq (4d\log{n}) /( \epsilon^2)$, so to sample with
small $p$ and throw out many edges, we would like $k$ to be large.
To show that such a large set of independent triangles exist, we invoke
the Hajnal-Szemer\'{e}di Theorem \cite{HajnalSzemeredi}:

\begin{lemma} (Hajnal-Szemer\'{e}di Theorem)
Every graph with $n$ vertices and maximum vertex degree at most $k$
is $k+1$ colorable with all color classes of size at least $n/k$.
\end{lemma}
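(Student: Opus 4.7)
The plan is to invoke the classical potential-function strategy that underlies the Hajnal-Szemer\'edi theorem. Since the maximum degree is at most $k$, a simple greedy procedure already produces a proper $(k+1)$-coloring; the nontrivial content of the lemma is the equitability of the color class sizes. My approach is therefore to start from an arbitrary proper $(k+1)$-coloring with classes $V_1,\dots,V_{k+1}$ and rebalance it via a sequence of recoloring moves until no class is too small.

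Specifically, I would define an imbalance potential such as $\Phi = \sum_{i=1}^{k+1} \binom{|V_i|}{2}$ and choose a proper $(k+1)$-coloring that minimizes $\Phi$ over all proper $(k+1)$-colorings. Suppose for contradiction that some class $V_a$ satisfies $|V_a| \geq |V_b| + 2$ for another class $V_b$. The goal is to exhibit a sequence of color swaps that moves a single vertex from $V_a$ to $V_b$ while preserving properness, which strictly decreases $\Phi$ and contradicts minimality.

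The mechanism is an augmenting-path style construction. For each $v \in V_a$, test whether $v$ can be recolored directly to some smaller class $V_c$; this fails only when $v$ has a blocking neighbor in every such class. When direct recoloring is blocked, I would record the blocking neighbors and iterate, building a directed tree of candidate swaps rooted at $V_a$ whose edges encode ``to move this vertex, I must first move that one.'' Using the fact that every vertex has at most $k$ neighbors while there are $k+1$ color classes, a pigeonhole argument at each vertex of the tree guarantees an unused class and forces the tree to keep expanding into classes of size less than $|V_a|$, until it eventually reaches $V_b$. The resulting chain of swaps performs the required rebalancing move.

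The main obstacle is precisely this augmenting-path analysis: controlling the situation when many classes are simultaneously small, ensuring that swaps along the chain do not conflict with each other, and guaranteeing termination. This is the technical heart of the original Hajnal-Szemer\'edi argument, and carrying it out carefully is genuinely delicate; for a self-contained presentation I would instead follow the inductive proof of Kierstead and Kostochka, which constructs an equitable $(k+1)$-coloring one vertex at a time, absorbing each newly added vertex by exactly such a recoloring-chain argument and thereby avoiding the global variational setup.
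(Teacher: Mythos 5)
The paper offers no proof of this lemma: it is the classical Hajnal--Szemer\'edi theorem, imported as a black box with a citation to the original 1970 paper, and used only to derive Corollary~\ref{cor:partition} (a partition of the triangles into independent sets of size $\Omega(t/\Delta)$). So there is no internal argument to compare yours against. Your sketch does identify the standard proof strategy --- start from an arbitrary proper $(k+1)$-coloring, minimize an imbalance potential, and rebalance via recoloring chains --- and your fallback to the Kierstead--Kostochka inductive proof is the right reference for a self-contained treatment. But as written your proposal is not a proof: the augmenting-chain analysis that you correctly flag as ``the technical heart'' is essentially the entire content of the theorem (the greedy $(k+1)$-colorability being trivial), and you leave it unexecuted, ultimately replacing it with a citation. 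That is a genuine gap in the proposal, even though it is no worse than what the paper itself does.

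One concrete issue worth flagging, independent of your argument: the statement as given in the paper cannot be literally correct, since $k+1$ color classes each of size at least $n/k$ would contain more than $n$ vertices in total. What Hajnal--Szemer\'edi actually provides is an \emph{equitable} $(k+1)$-coloring, i.e., all classes of size $\lfloor n/(k+1)\rfloor$ or $\lceil n/(k+1)\rceil$, hence each of size at least $\lfloor n/(k+1)\rfloor$. Any proof along the lines you sketch will terminate at that bound, not at $n/k$, so you should not expect your rebalancing argument to deliver the stated inequality. This discrepancy does not affect the paper's use of the lemma, since Corollary~\ref{cor:partition} only needs $\Omega(t/\Delta)$ triangles per class and $O(\Delta)$ classes, for which $n/(k+1)$ is just as good as $n/k$.
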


We can apply this theorem by considering the graph where each triangle
is a vertex and two vertices representing triangles $t_1$ and $t_2$ are
connected iff they have an edge in common.
Then vertices in this graph has degree at most $O(\Delta)$, and we get:

\begin{corollary} \label{cor:partition}
Given $t$ triangles where no edge belongs to more than $\Delta$ triangles,
we can partition the triangles into $S_1 \dots S_l$ such that
$|S_i| > \Omega(t / \Delta)$ and $l$ is bounded by $O(\Delta)$.
\end{corollary}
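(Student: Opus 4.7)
The plan is to instantiate the Hajnal--Szemer\'edi theorem on an auxiliary conflict graph whose vertices are the $t$ triangles of $G$, so that the promised independent sets of triangles become color classes of an equitable coloring.

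First I would build the auxiliary graph $H$ as follows: $V(H)$ is the set of all $t$ triangles of $G$, and I put an edge between two triangles $T_1, T_2 \in V(H)$ whenever $T_1$ and $T_2$ share an edge of $G$. The next step is a degree bound on $H$. Any triangle $T$ has exactly three edges in $G$, and by the hypothesis each of these edges is contained in at most $\Delta$ triangles. Hence $T$ can be adjacent in $H$ to at most $3(\Delta-1) = O(\Delta)$ other triangles, so $\deg_H \le k$ for some $k = O(\Delta)$.

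Now I apply the Hajnal--Szemer\'edi theorem (the lemma stated just above) to $H$ with this value of $k$. It produces a proper $(k+1)$-coloring of $H$ whose color classes $S_1, \ldots, S_{k+1}$ all have size at least $|V(H)|/(k+1) = t/(k+1) = \Omega(t/\Delta)$. Since each $S_i$ is an independent set in $H$, no two triangles in $S_i$ share an edge in $G$, which is precisely the edge-disjointness property required for the Chernoff argument in the previous paragraph. Taking $l = k+1 = O(\Delta)$ finishes the proof.

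The only real content is the degree bound in $H$; the rest is bookkeeping and invoking Hajnal--Szemer\'edi as a black box. A minor subtlety worth flagging is the factor of $3$ (versus a bound of $\Delta$) coming from the three edges of each triangle, but this only affects the hidden constant in the $O(\Delta)$ and $\Omega(t/\Delta)$ notation and does not change the statement.
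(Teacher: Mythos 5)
Your proof is correct and follows exactly the paper's route: the paper likewise forms the conflict graph on triangles with adjacency given by sharing an edge of $G$, bounds its maximum degree by $O(\Delta)$, and invokes the Hajnal--Szemer\'edi theorem to extract $O(\Delta)$ color classes each of size $\Omega(t/\Delta)$. Your explicit $3(\Delta-1)$ degree bound is a slightly more careful version of the paper's $O(\Delta)$ claim, but the argument is the same.
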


We can now bound what values of $p$ can give concentration:

\begin{theorem}
If $p^3 \in \Omega(\frac{d \Delta\log{n}}{\epsilon^2 t})$, then with probability
$1-n^{d-3}$, the sampled graph has a triangle count that $\epsilon$-approximates
$t$.
\end{theorem}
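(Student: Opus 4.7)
The plan is to combine Corollary~\ref{cor:partition} with the Chernoff bound already displayed in the excerpt. First I would apply the corollary to decompose all $t$ triangles of $G$ into color classes $S_1,\ldots,S_l$ where $l=O(\Delta)$ and each $|S_i|\ge \Omega(t/\Delta)$. Within a single class $S_i$ the triangles are pairwise edge-disjoint, so the indicator variables $X_j^{(i)}$ (equal to $1$ iff triangle $j\in S_i$ survives the edge sparsification) are mutually independent Bernoulli variables with $\Pr[X_j^{(i)}=1]=p^3$. The overall estimator is the sum $T'=\sum_i \sum_{j\in S_i} X_j^{(i)}$, which has mean $p^3 t$, and I will control each inner sum separately and then sum the guarantees.

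For a fixed class $S_i$ the earlier Chernoff display gives
$$\Pr\!\left[\,\Big|\tfrac{1}{|S_i|}\sum_{j\in S_i} X_j^{(i)} - p^3\Big|>\epsilon p^3\,\right]\le 2\exp\!\left(-\epsilon^2 p^3 |S_i|/2\right).$$
Substituting the hypothesis $p^3 \ge c_1 d\Delta\log n/(\epsilon^2 t)$ together with $|S_i|\ge c_2 t/\Delta$ yields $\epsilon^2 p^3 |S_i|/2 \ge (c_1 c_2/2)\, d\log n$. Choosing the hidden constant in the statement of the theorem large enough, the failure probability for one class is at most $n^{-(d-1)}$ (or any desired polynomial in $n$).

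Next I would union-bound over the $l=O(\Delta)$ classes. Since every vertex has degree at most $n-1$, we have $\Delta\le n-2$, hence $l\le C n$ for an absolute constant $C$, and the total failure probability is $O(n\cdot n^{-(d-1)})=O(n^{d-2}) \le n^{d-3}$ after adjusting the constant in the hypothesis on $p^3$. Conditioning on the complement of this event, every class satisfies $\sum_{j\in S_i} X_j^{(i)}\in (1\pm\epsilon)\,p^3|S_i|$; summing over $i$ and using $\sum_i |S_i|=t$ gives $T'\in (1\pm\epsilon)p^3 t$, which is the desired $\epsilon$-approximation to the triangle count (after scaling by $p^{-3}$).

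The one delicate point, and the only place where care is needed, is the bookkeeping of the constants so that the product $c_1 c_2$ swallows both the factor of $2$ in the Chernoff exponent and the extra $\log$-factor introduced by the union bound over $O(\Delta)\le O(n)$ classes; everything else is immediate from Corollary~\ref{cor:partition} and independence within a class. No additional ideas beyond edge-disjointness, Chernoff, and a union bound are required.
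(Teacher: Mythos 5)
Your proposal is correct and follows essentially the same route as the paper: partition the triangles into edge-disjoint classes via Corollary~\ref{cor:partition}, apply the Chernoff bound within each class, and union-bound over the classes (the paper crudely bounds the number of classes by $n^3$, you use the sharper $O(\Delta)\le O(n)$, but this only changes constants/exponents in the inherited, already-sloppy $n^{d-3}$ versus $n^{3-d}$ sign convention of the theorem statement). No substantive difference.
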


\begin{proof}
Consider the partition of triangles given by corollary \ref{cor:partition}.
By choice of $p$ we get that the probability that the triangle count in
each set is preserved within a factor of $\epsilon/2$ is at least $1-n^{d}$.
Since there are at most $n^3$ such sets, an application of the union
bounds gives that their total is approximated within a factor
of $\epsilon/2$ with probability at least $1 - n^{d-3}$.
This gives that the triangle count is approximated
within a factor of $\epsilon$ with probability at least $1 - n^{d-3}$.
\end{proof} 

\subsection{Triple Sampling}

Since each triangle corresponds to a triple of vertices, we can construct a set of
 triples that include all triangles, $U$.
From this list, we can then sample some triples uniformly, let these samples be numbered from $1$ to $s$.
Also, for the $i^{th}$ triple sampled, let $X_i$ be $1$ it is a triangle and $0$
otherwise.
Since we pick triples randomly from $U$ and $t$ of them are triangles,
we have $E(X_i) = \frac{t}{|U|}$ and $X_i$s are independent.
So by Chernoff bound we obtain:

$$Pr \left[ |\frac{1}{s} \sum_{i=1}^s X_i - \frac{t}{|U|}| > \epsilon \frac{t}{|U|}
 \right] \leq 2e^{-\epsilon^2ts/(2|U|)}$$

So when $s = \Omega(|U|/t \log{n} / \epsilon^2)$, we have $(\frac{1}{s}
\sum_{i=1}^s X_i / s )|U|$
approximates $t$ within a factor of $\epsilon$ with probability at least $1-n^{-d}$
for any $d$ of our choice.
As $|U| \leq n^3$, this immediately gives an algorithm with runtime
$O(n^3\log{n}/(t\epsilon^2))$ that approximates $t$ within a factor
of $\epsilon$. Slightly more careful bookkeeping can also give tighter bounds on
$|U|$ in sparse graphs.

Consider a triple containing vertex $u$, $(u,v,w)$.
Since $uv, uw \in E$, we have the number of such triples involving $u$ is at
most $\text{deg}(u)^2$.
Also, as $vw \in E$, another bound on the number of such triples is $m$.
When $\text{deg}(u)^2 > m$, or $\text{deg}(u) > m^{1/2}$, the second
 bound is tighter, and the first is in the other case.

These two cases naturally suggest that low degree vertices with degree at most
$m^{1/2}$ be treated separately from high degree vertices with degree greater
than $m^{1/2}$.
For the number of triangles around low degree vertices, 
since $x^2$ is concave, the value of $\sum_u \text{deg}(u)^2$ is maximized
when all edges are concentrated in as few vertices as possible.
Since the maximum degree of such a vertex is $m^{1/2}$, the number of such
triangles is upper bounded by $m^{1/2} \cdot (m^{1/2})^2 = m^{3/2}$.
Also, as the sum of all degrees is $2m$, there can be at most $2m^{1/2}$ high
degree vertices, which means the total number of triangles incident to these
high degree vertices is at most $2m^{1/2} \cdot m = 2m^{3/2}$.
Combing these bounds give that $|U|$ can be upper bounded by $3m^{3/2}$.
Note that this bound is asymptotically tight when $G$ is a complete
graph ($n = m^{1/2}$).
However, in practice the second bound can be further reduced by
 summing over the degree of all $v$ adjacent to $u$, becoming 
$\sum_{uv \in E} \text{deg}(v)$.
As a result, an algorithm that implicitly constructs $U$ by picking the better one among these two cases by examining the degrees of all neighbors will achieve
$$|U| \leq O(m^{3/2})$$

This better bound on $U$ gives an algorithm that $\epsilon$ approximates the
number of triangles in time:

$$O \left( m + \frac{m^{3/2} \log{n}}{t \epsilon^2} \right)$$

As our experimental data in section 4.1. indicate, 
the value of $t$ is usually $\Omega(m)$ in practice.
In such cases, the second term in the above calculation becomes negligible
compared to the first one.
In fact, in most of our data, just sampling the first type of triples (aka. pretending
all vertices are of low degree) brings the second term below the first.

\subsection{Hybrid algorithm}

Edge sparsification with a probability of $p$ allows us to only work on $O(mp)$
edges, therefore the total runtime of the triple sampling algorithm after
sparsification with probability $p$ becomes:
$$O\left(mp+\frac{(mp)^{3/2} }{\epsilon^2 tp^3} \right) =
O\left(mp+\frac{m^{\alpha}}{\epsilon^2 t p^{3/2}} \right) $$

As stated above, since the first term in most practical cases are much larger,
we can set the value of $p$ to balance these two terms out:

\begin{align*}
pm &= \frac{m^{3/2}\log{n}}{p^{3/2}t\epsilon^2}\\
p^{5/2} t \epsilon^2 &= m^{1/2} \log{n}\\
p &= \left( \frac{m^{1/2}\log{n}}{t \epsilon^2} \right) ^{2/5}
\end{align*}

The actual value of $p$ picked would also depend heavily on constants in front
of both terms, as sampling is likely much less expensive due to factors such as
cache effect and memory efficiency.
Nevertheless, our experimental results in section 4 does seem to indicate that
this type of hybrid algorithms can perform better in certain situations.

\section{Experiments}
\label{sec:experiments}

\subsection{Data}

The graphs used in our experiments are shown in Table~\ref{tab:datasets}. 
Multiple edges and self loops were removed (if any).  

\begin{table}[ht]
\begin{center}
\begin{tabular}{l|r|r|r|l}
 Name  & Nodes & Edges &  Triangle Count & Description\\ \hline \hline

AS-Skitter & 1,696,415 & 11,095,298 & 28,769,868& Autonomous Systems \\ \hline
Flickr & 1,861,232  & 15,555,040 &  548,658,705 & Person to Person \\ \hline
Livejournal-links & 5,284,457 & 48,709,772 & 310,876,909 & Person to Person \\ \hline
Orkut-links & 3,072,626 & 116,586,585 & 285,730,264 & Person to Person \\ \hline
Soc-LiveJournal & 4,847,571 & 42,851,237 & 285,730,264 & Person to Person \\ \hline
Web-EDU&   9,845,725  &    46,236,104   & 254,718,147 &   Web Graph (page to page) \\ \hline
Web-Google & 875,713 & 3,852,985& 11,385,529 & Web Graph \\ \hline
Wikipedia 2005/11    & 1,634,989  & 18,540,589   & 44,667,095 & Web Graph    (page to page) \\\hline
Wikipedia 2006/9  & 2,983,494  & 35,048,115   &84,018,183& Web Graph  (page to page) \\ \hline
Wikipedia 2006/11 & 3,148,440  & 37,043,456 &88,823,817  & Web Graph (page to page) \\ \hline
Wikipedia 2007/2 & 3,566,907 & 42,375,911  & 102,434,918 & Web Graph (page to page) \\\hline
Youtube\cite{mislove} &   1,157,822  & 2,990,442&   4,945,382  & Person to Person \\ \hline
\end{tabular}
\end{center}
\caption{Datasets used in our experiments.}
\label{tab:datasets}
\end{table}

\subsection{Experimental Setup and Implementation Details} 

The experiments were performed on a single machine, with Intel Xeon CPU
at 2.83 GHz, 6144KB cache size and and 50GB of main memory. 
The graphs are from real world web-graphs, some details regarding them are in the chart below.
The algorithm as implemented in C++, and compiled using gcc version 4.1.2 and
the -O3 optimization flag.
Time was measured by taking the user time given by the linux time command.
IO times are included in that time since the amount of memory operations
performend in setting up the graph is non-trivial.
However, we use a modified IO routine that's much faster than the standard
C/C++ scanf.

A major optimization that we used was to sort the edges in the graph and store
the input file in the format as a sequence of neighbor lists per vertex.
Each neighbor list begins with the size of the list, followed by the neighbors.
This is similar to how softwares such as Matlab store sparse matrices, and
the preprocessing time to change the data into this format is not counted.
It can significantly improve the cache property of the graph stored, and therefore
improving the performance.

Some implementation details can be based on this graph storage format.
Since each triple that we check already have 2 edges already in the graph, it
suffices to check whether the 3rd edge in the graph.
This can be done offline by comparing a smaller list of edges against the initial
edge list of the graph and count the number of entries that they have in common.
Once we sort the query list, the entire process can be done offline in one pass
through the graph. This also means that instead of picking a pre-determined sample rate for the triples,
we can vary the sample rate for them so the number of queries is about the same
as the size of the graph. Finally, in the next section we discuss the details behind
efficient binomial sampling. Specifically picking a random subset of expected
size $p|S|$ from a set $S$ can be done in expected sublinear time \cite{knuth}. 

\subsubsection{Binomial Sampling in Expected Sublinear time}
\label{sec:sublin}

Most of our algorithms have the following routine in their core: given a list of values,
keep each of them with probability $p$ and discard with probability $1-p$.
If the list has length $n$, this can clearly be done using $n$ random variables.
As generating random variables can be expensive, it's preferrable to use
$O(np)$ random variables in expectation if possible.
One possibility is to pick $O(np)$ random elements, but this would likely
involve random accesses in the list, or maintaining a list of the indices picked
in sorted order. A simple way that we use in our code to perform this sampling is to generate the
differences between indices of entries retained \cite{knuth}.
This variable clearly belongs to an exponential distribution, and if $x$ is a uniform
random number in $(0, 1)$, taking $\lceil \log_{(1-p)}x \rceil$.
The primary advantage of doing so is that sampling can be done while accessing
the data in a sequential fashion, which results in much better cache performances.

\subsection{Results}

The six variants of the code involved in the experiment are first separated by
whether the graph was first sparsified by keeping each edge with probability $p = 0.1$.
In either case, an exact algorithm based on hybrid sampling with performance
bounded by $O(m^{3/2})$ is ran.
Then two triple based sampling algorithms are also considered.
They differ in whether an attempt to distinguish between low and high degree vertices,
so the simple version is essentially sampling all 'V' shaped triples off each
vertex. Note that no sparsification and exact also generates the exactly number of
triangles. Errors are measured by the absolute value of the difference between the value
produced and the exact number of triangles divided by the exact number.
The results on error and running time are averages over five runs. 
Results on these graphs described above are, the methods listed in the columns listed
in Table~\ref{tab:results}.

\begin{table}[ht]
\begin{center}
\begin{tabular}{p{2.5cm}|c|c|c|c|c|c|c|c|c|c|c|c|c}\hline \hline 
& \multicolumn{6}{|c|}{No Sparsification}
& \multicolumn{6}{|c|}{Sparsified ($p = .1$)} \\ \hline
Graph      & \multicolumn{2}{|c|}{Exact}
& \multicolumn{2}{|c|}{Simple}
&\multicolumn{2}{|c|}{Hybrid}
&\multicolumn{2}{|c|}{Exact}
& \multicolumn{2}{|c|}{Simple}
&\multicolumn{2}{|c|}{Hybrid} \\ \hline
& err(\%) & time & err(\%) & time & err(\%) & time & err(\%) & time & err(\%) & time & err(\%) & time \\ \hline
   
AS-Skitter
& 0.000 & 4.452
& 1.308 & 0.746
& 0.128 & 1.204
& 2.188 & 0.641
& 3.208 & 0.651
& 1.388 & 0.877
\\ \hline
Flickr
& 0.000 & 41.981
& 0.166 & 1.049
& 0.128 & 2.016
& 0.530 & 1.389
& 0.746 & 0.860
& 0.818 & 1.033
\\ \hline
Livejournal-links
& 0.000 & 50.828
& 0.309 & 2.998
& 0.116 & 9.375
& 0.242 & 3.900
& 0.628 & 2.518
& 1.011 & 3.475
\\ \hline
Orkut-links
& 0.000 & 202.012
& 0.564 & 6.208
& 0.286 & 21.328
& 0.172 & 9.881
& 1.980 & 5.322
& 0.761 & 7.227
\\ \hline
Soc-LiveJournal
& 0.000 & 38.271
& 0.285 & 2.619
& 0.108 & 7.451
& 0.681 & 3.493
& 0.830 & 2.222
& 0.462 & 2.962
\\ \hline
Web-EDU
& 0.000 & 8.502
& 0.157 & 2.631
& 0.047 & 3.300
& 0.571 & 2.864
& 0.771 & 2.354
& 0.383 & 2.732
\\ \hline
Web-Google
& 0.000 & 1.599
& 0.286 & 0.379
& 0.045 & 0.740
& 1.112 & 0.251
& 1.262 & 0.371
& 0.264 & 0.265
\\ \hline
Wiki-2005
& 0.000 & 32.472
& 0.976 & 1.197
& 0.318 & 3.613
& 1.249 & 1.529
& 7.498 & 1.025
& 0.695 & 1.313
\\ \hline
Wiki-2006/9
& 0.000 & 86.623
& 0.886 & 2.250
& 0.361 & 7.483
& 0.402 & 3.431
& 6.209 & 1.843
& 2.091 & 2.598
\\ \hline
Wiki-2006/11
& 0.000 & 96.114
& 1.915 & 2.362
& 0.530 & 7.972
& 0.634 & 3.578
& 4.050 & 1.947
& 0.950 & 2.778
\\ \hline
Wiki-2007
& 0.000 & 122.395
& 0.943 & 2.728
& 0.178 & 9.268
& 0.819 & 4.407
& 3.099 & 2.224
& 1.448 & 3.196
\\ \hline
Youtube
& 0.000 & 1.347
& 1.114 & 0.333
& 0.127 & 0.500
& 1.358 & 0.210
& 5.511 & 0.302
& 1.836 & 0.268
\\     \hline \hline 
\end{tabular}
\end{center}
\caption{Results of Experiments Averaged Over 5 Trials}
\label{tab:results}
\end{table}

\subsection{Remarks}

From Table~\ref{tab:results} it is clear that none of the variants clearly outperforms the others on
all the data. The gain/loss from sparsification are likely due to the fixed sampling rate, so
varying it as in earlier works \cite{Tsourakakiskdd09} are likely to mitigate
this discrepancy. The difference between simple and hybrid sampling are due to the fact
that handling the second case of triples has a much worse cache access
pattern as it examines vertices that are two hops away.
There are alternative implementations of how to handle this situation, which
would be interesting for future implementations.
A fixed sparsification rate of $p = 10\%$ was used mostly to simplify the setups
of the experiments. In practice varying $p$ to look for a rate where the result stabalizes is the preferred
option \cite{TsourakakisArxiv}.

When compared with previous results on this problem, the error rates and running
times of our results are all significantly lower.
In fact, on the wiki graphs our exact counting algorithms have about the same order 
of speed with other appoximate triangle counting implementations.

\section{Theoretical Ramifications} 
\label{sec:ramifications}
\subsection{Random Projections and Triangles}

Consider any two vertices $i,j \in V$ which are connected, i.e., $(i,j) \in E$. Observe that the inner product
of the $i$-th and $j$-th column of the adjacency matrix of graph $G$ gives the number of triangles that 
edge $(i,j)$ participates in. Viewing the adjacency matrix as a collection of $n$ points in $\field{R}^n$,
a natural question to ask is whether we can use results from the theory of random projections \cite{lindenstrauss}
to reduce the dimensionality of the points while preserving the inner products which contribute to the count
of triangles. Magen and Zouzias \cite{magen} have considered a similar problem, namely random projections which 
preserve approximately the volume for all subsets of at most $k$ points.

According to the  lemma~\ref{thrm:jllemma}, a random projecton $x \to Rx$ from $\field{R}^d \to \field{R}^k$ approximately preserves all Euclidean distances.
However it does not preserve all pairwise inner products. This can easily be seen by considering the set of points
$$
e_1, \ldots, e_n \in \field{R}^n = \field{R}^d.
$$
where $e_1=(1,0,\ldots,0)$ etc. 
Indeed, all inner products of the above set are zero, which cannot happen for the points $R e_j$ as they belong to a lower dimensional space 
and they cannot all be orthogonal. For the triangle counting problem we do not need 
to approximate {\em all} inner products. 
Suppose $A \in \Set{0,1}^n$ is the adjacency matrix of a simple undirected graph $G$ 
with vertex set $V(G) = \Set{1,2,\ldots,n}$ and write $A_i$ for the $i$-the column of $A$. 
The quantity we are interested in is the number of triangles in $G$ (actually six times the number of triangles)
$t = \sum_{u,v,w \in V(G)} A_{uv} A_{vw} A_{wu}$.

If we apply a random projection of the above kind to the columns of $A$
$A_i \to R A_i$ and write $X = \sum_{u,v,w \in V(G)} (RA)_{uv} (RA)_{vw} (RA)_{wu}$
it is easy to see that $\Mean{X}=0$ since $X$ is a linear combination of triple products $R_{ij} R_{kl} R_{rs}$ of entries of the 
random matrix $R$ and that all such products have expected value $0$, no matter what the indices. So we cannot expect this kind of random projection to work.

Therefore we consider the following approach which still has limitations as we will show in the following. Let
$t = \sum_{u \sim v} A_u^\top A_v,\ \ \mbox{where $u \sim v$ means $A_{uv}=1$},$
and look at the quantity
\begin{eqnarray*}
Y &=& \sum_{u \sim v} (R A_u)^\top (RA_v)\\
 &=& \sum_{l=1}^k \sum_{i,j=1}^n \left(\sum_{u \sim v} A_{iu} A_{jv} \right) R_{li} R_{lj}\\
 &=& \sum_{l=1}^k \sum_{i,j=1}^n \#\Set{i-*-*-j} R_{li} R_{lj}.
\end{eqnarray*}
This is a quadratic form in the gaussian $N(0,1)$ variables $R_{ij}$.
By simple calculation for the mean value and diagonalization for the variance we see that if the $X_j$ are independent $N(0,1)$ variables and
$$
Z = X^\top B X,
$$
where $X = (X_1,\ldots,X_n)^\top$ and $B \in \field{R}^{n \times n}$ is \textit{symmetric}, that
\begin{eqnarray*}
\Mean{Z} &=& \tr{B}\\
\Var{Z} &=& \tr{B^2} = \sum_{i,j=1}^n (B_{ij})^2.
\end{eqnarray*}
Hence $\Mean{Y} = \sum_{l=1}^k \sum_{i=1}^n \#\Set{i-*-*-i} = k\cdot t$ so the mean value is the quantity we want (multiplied by $k$).
For this to be useful we should have some concentration for $Y$ near $\Mean{Y}$. 
We do not need exponential tails because we have only one quantity to control. 
In particular, a statement of the following type
$$
\Prob{\Abs{Y-\Mean{Y}} > \epsilon \Mean{Y}} < 1-c_{\epsilon},
$$
where $c_{\epsilon}>0$ would be enough.
The simplest way to check this is by computing the standard deviation of $Y$. 
By Chebyshev's inequality it suffices that the standard deviation be much smaller than $\Mean{Y}$.
According to the formula above for the variance of a quadratic form we get
\begin{eqnarray*}
\Var{Y} &=& \sum_{l=1}^k \sum_{i,j=1}^n \#\Set{i-*-*-i}^2\\
 &=& C\cdot k \cdot \#\Set{x-*-*-*-*-*-x} =\\
 &=& C\cdot k \cdot \mbox{(number of circuits of length 6 in $G$)}.
\end{eqnarray*}
Therefore, to have concentration it is sufficient that
\beql{condition}
\Var{Y} = o(k \cdot (\Mean{Y})^2).
\eeq

Observe that \eqref{condition} is a sufficient -and not necessary-  condition.
Furthermore,\eqref{condition} is certainly not always true as there are graphs with many 6-circuits and no triangles at all (the circuits {\em may} repeat vertices or edges).

\subsection{Sampling in the Semi-Streaming Model}

The previous analysis of triangle counting by Alon, Yuster and Zwick was done
in the streaming model \cite{alon:alon}, where the assumption was constant
space overhead.
We show that our sampling algorithm can be done in a slightly weaker model
with space usage equaling:

$$O\left(m^{1/2}\log{n} + \frac{m^{3/2} \log{n}}{t \epsilon^2} \right)$$

We assume the edges adjacent to each vertex are given in order \cite{feigenbaum}.
We first need to identify high degree vertices, specifically the ones with degree
higher than $m^{1/2}$. This can be done by sampling $O(m^{1/2}\log{n})$ edges
and recording the vertices that are endpoints of one of those edges.

\begin{lemma}
Suppose $d m^{1/2} \log{n}$ samples were taken, then the probability of all
vertices with degree at least $m^{1/2}$ being chosen is at least $1-n^{-d+1}$.
\end{lemma}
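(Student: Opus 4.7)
The plan is to fix a single high-degree vertex $v$ (with $\deg(v) \ge m^{1/2}$) and show it is very likely to be hit by at least one of the $s = d\, m^{1/2} \log n$ sampled edges, then finish with a union bound over all such vertices.

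First I would analyze one sample. Since the sampling is uniform over the $m$ edges and each sample is independent, the probability that a single sample is \emph{not} incident to $v$ is
\[
1 - \frac{\deg(v)}{m} \;\le\; 1 - \frac{m^{1/2}}{m} \;=\; 1 - m^{-1/2}.
\]
By independence of the $s$ samples, the probability that none of the sampled edges touches $v$ is at most
\[
\left(1 - m^{-1/2}\right)^{s} \;\le\; \exp\!\left(-s \cdot m^{-1/2}\right) \;=\; \exp(-d \log n) \;=\; n^{-d},
\]
using the standard inequality $1-x \le e^{-x}$.

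Next I would union bound over the set $H$ of vertices with $\deg(v) \ge m^{1/2}$. Since $|H| \le n$ (trivially, or $|H| \le 2 m^{1/2}$ by the degree sum, which is also fine), the probability that some high-degree vertex fails to appear as an endpoint of a sampled edge is at most $|H| \cdot n^{-d} \le n \cdot n^{-d} = n^{-d+1}$. The complement gives the claimed $1 - n^{-d+1}$ bound.

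The proof is essentially a straightforward coupon-collector-style argument, so I do not expect any real obstacle. The only subtle point worth being explicit about is whether sampling is with or without replacement: with replacement the calculation above is immediate, and without replacement the probability of missing $v$ is only smaller (it is a product of terms each at most $1 - m^{-1/2}$ when $\deg(v) \ge m^{1/2}$ and the sample size is $o(m)$), so the same upper bound holds. One should also note that the semi-streaming assumption (edges incident to each vertex arrive together) is what makes it possible to actually implement this sampling and then, in a subsequent pass, record the full neighborhoods of the identified high-degree vertices within the stated space budget.
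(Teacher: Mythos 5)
Your proof is correct and follows essentially the same route as the paper's: bound the probability that a fixed high-degree vertex is missed by all $d\,m^{1/2}\log n$ independent samples via $(1-m^{-1/2})^{s}\le e^{-s m^{-1/2}}=n^{-d}$, then union bound over at most $n$ vertices. Your added remarks on sampling with versus without replacement are a sensible clarification but do not change the argument.
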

\proof
Consider some vertex $v$ with degree at least $m^{1/2}$.
The probability of it being picked in each iteration is at least
$m^{1/2} / m = m^{-1/2}$.
As a result, the probability of it not picked in $d m^{1/2} \log{n}$ iterations is:
$$(1 - m^{-1/2})^{dm^{1/2}\log{n}}
= \left [(1-m^{1/2})^{m^{1/2}} \right ] ^ {d \log{n}}
\leq \left ( \frac{1}{e} \right ) ^ {d \log{n}} = n^{-d}$$
As there are at most $n$ vertices, applying union bound gives that all vertices
with degree at least $m^{1/2}$ are sampled with probability at least $1-n^{-d + 1}$.
\qed

This requires one pass of the graph. Note that the number of such candidates for
high degree vertices can be reduced to $m^{1/2}$ using another pass over the edge list.

For all the low degree vertices, we can read their $O(m^{1/2})$ neighbors
and sample them.
For the high degree vertices, we do the following: for each edge, obtain a random
variable $y$ from a binomial distribution equal to the number of
edge/vertices pairs that this edge is involved in.
Then pick $y$ vertices from the list of high degree vertices randomly.
These two sampling procedures can be done together in another pass over the data.

Finally, we need to check whether each edge in the sampled triples belong to the
edge list.
We can store all such queries into a hash table as there are at most
$O(\frac{m^{3/2} \log{n}}{t \epsilon^2})$ edges sampled w.h.p.
Then going through the graph edges in a single pass and looking them up in
table yields the desired answer.


\section{Conclusions \& Future Work}
\label{sec:concl}
In this work, we extended previous work \cite{Tsourakakiskdd09,TsourakakisArxiv} by introducing the powerful idea of Alon, Yuster and Zwick \cite{alon:alon}. 
Specifically, we propose a Monte Carlo algorithm which approximates the true number of triangles within $\epsilon$
and runs in $O \left( m + \frac{m^{3/2} \log{n} \Delta}{t \epsilon^2} \right)$ time. 
Our method can be extended to the semi-streaming model using three passes and a
memory overhead of $O\left(m^{1/2}\log{n} + \frac{m^{3/2} \log{n} \Delta}{t \epsilon^2} \right)$.

In practice our methods obtain excellent running times, typically few seconds for graphs with several millions of edges. The accuracy is also satisfactory,
especially for the type of applications we are concerned with. 
Finally, we propose a random projection based method for triangle counting and provide a sufficient 
condition to obtain an estimate with low variance. A natural question is the following:
can we provide some reasonable condition on $G$ that would guarantee \eqref{condition}? 
Finally, since our proposed methods are easily parallelizable, developing such an implementation in the \textsc{MapReduce} framework, see 
\cite{dean} and \cite{kangTF,kangTAFL}, is an natural practical direction.


%
%

\end{document}